\tikzset{snake it/.style={decorate, decoration=snake}}
\newcommand{\CDS}{\textnormal{CDS}}
\newcommand{\CDQS}{\textnormal{CDQS}}
\newcommand{\pp}{\textnormal{pp}}
\newcommand{\pc}{\textnormal{pc}}
\newcommand{\cc}{\textnormal{cc}}
\newcommand{\QNP}{\textnormal{QNP}}
\newcommand{\coQNP}{\textnormal{coQNP}}
\newcommand{\NP}{\textnormal{NP}}
\newcommand{\coNP}{\textnormal{coNP}}
\newcommand{\FR}{\textnormal{FR}}
\newcommand{\FBB}{\textnormal{FBB}}
\begin{document}

\title{Rank Lower Bounds on Non-local Quantum Computation} 

\author[1]{Vahid R. Asadi \thanks{ \texttt{vrasadi@uwaterloo.ca}}}

\author[1]{Eric Culf  \thanks{ \texttt{eculf@uwaterloo.ca}}}

\author[1,2]{Alex May \thanks{\texttt{amay@perimeterinstitute.ca}}}

\affil[1]{Institute for Quantum Computing, Waterloo, Ontario}
\affil[2]{Perimeter Institute for Theoretical Physics, Waterloo, Ontario}

\maketitle

\begin{abstract}
A non-local quantum computation (NLQC) replaces an interaction between two quantum systems with a single simultaneous round of communication and shared entanglement. 
We study two classes of NLQC, $f$-routing and $f$-BB84, which are of relevance to classical information theoretic cryptography and quantum position-verification. 
We give the first non-trivial lower bounds on entanglement in both settings, but are restricted to lower bounding protocols with perfect correctness.
Within this setting, we give a lower bound on the Schmidt rank of any entangled state that completes these tasks for a given function $f(x,y)$ in terms of the rank of a matrix $g(x,y)$ whose entries are zero when $f(x,y)=0$, and strictly positive otherwise.
This also leads to a lower bound on the Schmidt rank in terms of the non-deterministic quantum communication complexity of $f(x,y)$. 
Because of a relationship between $f$-routing and the conditional disclosure of secrets (CDS) primitive studied in information theoretic cryptography, we obtain a new technique for lower bounding the randomness complexity of CDS. 
\end{abstract}

\tableofcontents

\section{Introduction}

A non-local quantum computation (NLQC) replaces a local interaction with entanglement and a single simultaneous round of communication. 
See \cref{fig:non-localandlocal}.
NLQC has applications in a number of areas, including quantum position-verification \cite{kent2006tagging,kent2011quantum,malaney2016quantum}, Hamiltonian simulation \cite{apel2024security}, the AdS/CFT correspondence \cite{may2019quantum, may2020holographic,may2021holographic, may2022complexity}, and information theoretic cryptography \cite{allerstorfer2023relating}. 
In all of these settings, we are interested in understanding the resources necessary to implement a given computation in the non-local form. 
Lower bounds on the entanglement necessary to complete a non-local computation are so far poorly understood but are central to these applications. 

Two important classes of non-local computations are $f$-routing and $f$-BB84, both initially studied in the context of quantum position-verification \cite{kent2011quantum,bluhm2021position}. 
These computations involve classical inputs of size $n$ and $O(1)$ size quantum inputs. 
The local implementation of these computations involves computing a classical function $f$ from the classical inputs, then doing $O(1)$ quantum operations conditioned on the outcome. 
A longstanding goal has been to prove that the non-local implementation of the same computation requires growing quantum resources, and in particular growing entanglement. 
Recently, a bound on the quantum system size was proven in \cite{bluhm2021position}, but this assumes a purified view that absorbs classical processing into the quantum part of the computation.\footnote{See \cite{asadi2024linear} for more discussion of this point.} 
Another approach appears in \cite{asadi2024linear}, which bounds the number of quantum gates required in the non-local implementation. 
Here, we achieve the goal of lower bounding entanglement, at least as quantified by the Schmidt rank of the shared system, but at the cost of only considering protocols that complete $f$-routing or $f$-BB84 with only one-sided error.

\begin{figure*}
    \centering
    \begin{subfigure}{0.45\textwidth}
    \centering
    \begin{tikzpicture}[scale=0.6]
    
    \draw[thick] (-1,-1) -- (-1,1) -- (1,1) -- (1,-1) -- (-1,-1);
    
    \draw[thick] (-3.5,-3) to [out=90,in=-90] (-0.5,-1);
    \node[below] at (-3.5,-3) {$A$};
    \draw[thick] (3.5,-3) to [out=90,in=-90] (0.5,-1);
    \node[below] at (3.5,-3) {$B$};
    
    \draw[thick] (0.5,1) to [out=90,in=-90] (3.5,3);
    \node[above] at (3.5,3) {$B$};
    \draw[thick] (-0.5,1) to [out=90,in=-90] (-3.5,3);
    \node[above] at (-3.5,3) {$A$};
    
    \node at (0,0) {$\mathcal{T}$};

    \node at (0,-5) {$ $};
    
    \end{tikzpicture}
    \caption{}
    \label{fig:local}
    \end{subfigure}
    \hfill
    \begin{subfigure}{0.45\textwidth}
    \centering
    \begin{tikzpicture}[scale=0.4]
    
    \draw[thick] (-5,-5) -- (-5,-3) -- (-3,-3) -- (-3,-5) -- (-5,-5);
    \node at (-4,-4) {$\mathcal{V}^L$};
    
    \draw[thick] (5,-5) -- (5,-3) -- (3,-3) -- (3,-5) -- (5,-5);
    \node at (4,-4) {$\mathcal{V}^R$};
    
    \draw[thick] (5,5) -- (5,3) -- (3,3) -- (3,5) -- (5,5);
    \node at (4,4) {$\mathcal{W}^R$};
    
    \draw[thick] (-5,5) -- (-5,3) -- (-3,3) -- (-3,5) -- (-5,5);
    \node at (-4,4) {$\mathcal{W}^L$};
    
    \draw[thick] (-4.5,-3) -- (-4.5,3);
    
    \draw[thick] (4.5,-3) -- (4.5,3);
    
    \draw[thick] (-3.5,-3) to [out=90,in=-90] (3.5,3);
    
    \draw[thick] (3.5,-3) to [out=90,in=-90] (-3.5,3);
    
    \draw[thick] (-3.5,-5) to [out=-90,in=-90] (3.5,-5);
    \draw[black] plot [mark=*, mark size=3] coordinates{(0,-7.05)};
    
    \draw[thick] (-4.5,-6) -- (-4.5,-5);
    \node[below] at (-4.5,-6) {$A$};
    \draw[thick] (4.5,-6) -- (4.5,-5);
    \node[below] at (4.5,-6) {$B$};
    
    \draw[thick] (4.5,5) -- (4.5,6);
    \node[above] at (-4.5,6) {$A$};
    \draw[thick] (-4.5,5) -- (-4.5,6);
    \node[above] at (4.5,6) {$B$};
    
    \end{tikzpicture}
    \caption{}
    \label{fig:non-localcomputation}
    \end{subfigure}
    \caption{Local and non-local computations. a) A channel $\mathcal{T}_{AB\rightarrow AB}$ is implemented by directly interacting the input systems. b) A non-local quantum computation. The goal is for the action of this circuit on the $AB$ systems to approximate the channel $\mathcal{T}_{AB\rightarrow AB}$.}
    \label{fig:non-localandlocal}
\end{figure*}

It is a striking feature of non-local quantum computation that these mostly classical local computations require large quantum entanglement to implement non-locally. 
Beyond this, establishing this quantum-classical separation is relevant for quantum position-verification and for information theoretic cryptography. 
In a QPV scheme, we want to find operations that are as easy as possible for the honest player (who acts locally) to implement, but as hard as possible for the dishonest player (who implements an NLQC). 
Proving entanglement lower bounds for $f$-routing or $f$-BB84 gives a verification scheme where the honest player implements a nearly classical computation, while the dishonest player needs large quantum resources. 
This establishes the security of such QPV schemes under a bounded entanglement assumption. 
Our bound is only in the setting of one-sided error and hence not immediately relevant to practical schemes, but represents a step towards proving entanglement lower bounds in the robust setting.\footnote{Many entanglement lower bounds apply in the robust setting, see e.g. \cite{tomamichel2013monogamy, escola2024quantum, may2022complexity, beigi2011simplified}, but none grow with the classical input size. We achieve growth with classical input size at the expense of robustness.}

Also relevant to our work is the connection between NLQC and information theoretic cryptography. 
In information theoretic cryptography, we wish to understand the communication and randomness cost to establish information theoretic privacy in various cryptographic settings. 
A primitive known as conditional disclosure of secrets (CDS) has been identified as one of the simplest settings where one can hope to understand the cost of privacy. 
CDS was initially studied in the context of private information retrieval \cite{GERTNER2000592}, has applications in secret sharing \cite{applebaum2020power}, and shares a number of connections to communication complexity and interactive proofs \cite{applebaum2021placing}. In \cite{allerstorfer2023relating}, $f$-routing was shown to be closely related to CDS: lower bounds on entanglement cost in $f$-routing give lower bounds on randomness in CDS. 
This opens a new avenue for quantum information techniques to make contributions to classical information theoretic cryptography. 

\subsection{Summary of our results}
An $f$-routing task is defined by a choice of Boolean function $f:\{0,1\}^n\times \{0,1\}^n\rightarrow \{0,1\}$. 
In the task, Alice, on the left, receives a quantum system $Q$ of $O(1)$ size and a classical string $x\in \{0,1\}^n$. 
Bob, on the right, receives $y\in \{0,1\}^n$. 
Their goal is to bring $Q$ to Alice if $f(x,y)=0$, and to Bob if $f(x,y)=1$. 
In the setting of non-local computation, Alice and Bob may only use shared entanglement and a single round of communication to accomplish this. 
A difficulty in accomplishing this task arises since quantum mechanics prevents the system in register~$Q$ from being copied, and hence apparently Alice is forced to decide alone whether to keep or send~$Q$. 
Furthermore, she doesn't know where~$Q$ is needed until learning $y$, at which point the single round of communication has already passed. 

Despite the difficulty, the $f$-routing task can be completed for arbitrary functions $f$, with the most efficient known protocols using $2^{\Theta(\sqrt{n\log n})}$ entanglement and communication for arbitrary functions \cite{allerstorfer2023relating}, and efficient protocols known for some low complexity classes of functions \cite{cree2023code,buhrman2013garden}. 
So far, only an $O(1)$ lower bound is known on entanglement in $f$-routing, as we review in \cref{appendix:trivialbounds}.
Here, we prove a new lower bound on $f$-routing with one-sided perfection. 
We define a $(\epsilon_0,\epsilon_1)$-correct $f$-routing scheme as one which brings $Q$ to the left with fidelity at least $1-\epsilon_0$ when $(x,y)\in f^{-1}(0)$ and brings $Q$ to the right with fidelity at least $1-\epsilon_1$ when $(x,y)\in f^{-1}(1)$. 
Define $FR_0(f)$ as the log of the minimal Schmidt rank of any entangled state that suffices to complete $f$-routing with $\epsilon_0=0$, $\epsilon_1>0$ for the function $f$, and $FR_1(f)$ as the analogous quantity with $\epsilon_0>0, \epsilon_1=0$. 
We prove the bounds
\begin{align}\label{eq:introfRbounds}
    \FR_0(f) &\geq \frac{1}{4}\log \rank(g|_f) \enspace, \nonumber \\
    \FR_1(f) &\geq \frac{1}{4}\log \rank(g|_{\neg f})\enspace,
\end{align}
where $g|_f(x,y)$ is a real-valued function that satisfies $f(x,y)=0 \leftrightarrow g(x,y)=0$, $f(x,y)=1\leftrightarrow g(x,y)>0$. 
For some natural functions, this provides explicit strong lower bounds. 
For instance, taking $f=EQ$ to be the equality function, we are assured $g(x,y)$ is diagonal with non-zero entries on the diagonal and hence is of full rank, therefore $FR_0(EQ)\geq n/4$. 
Similarly, we obtain linear lower bounds on $FR_1(NEQ)$ where $NEQ$ is the non-equality function. 

As mentioned above, $f$-routing is related closely to the conditional disclosure of secrets (CDS) primitive. 
A CDS instance is defined by a choice of Boolean function $f:\{0,1\}^n\times \{0,1\}^n\rightarrow \{0,1\}$. 
In CDS, Alice and Bob receive inputs $x\in \{0,1\}^n$ and $y\in \{0,1\}^n$ respectively, while Alice additionally holds a secret string $s\in\{0,1\}^k$.
Alice and Bob share randomness, and send messages to a referee. 
Their goal is for the referee to be able to determine $s$ if and only if $f(x,y)=1$.  

In \cite{allerstorfer2023relating}, it was shown that the randomness cost of CDS for a function $f$ upper bounds the entanglement cost of an $f$-routing scheme for the same function. 
Because of this, our lower bounds on entanglement in $f$-routing imply the same lower bounds on randomness in CDS. 
Because we only need one sided perfection for our bounds to apply, it is possible to bound CDS with imperfect correctness and perfect privacy (ppCDS), or perfect correctness but imperfect privacy (pcCDS). We obtain:
\begin{align*}
    \pp{\CDS}(f) &\geq \Omega(\log \rank(g|_f))\enspace, \nonumber \\
    \pc{\CDS}(f) &\geq \Omega(\log \rank(g|_{\neg f}))\enspace.
\end{align*}
The expressions on the left hand side denote randomness costs. 
This provides a new technique for bounding randomness in CDS, which is proven using the quantum information perspective but can be phrased purely in terms of classical quantities.

We also prove lower bounds against the $f$-BB84 class of non-local quantum computations. 
An $f$-BB84 task is defined by a choice of Boolean function $f(x,y)$ with $x\in \{0,1\}^n$ given to Alice and $y\in \{0,1\}^n$ given to Bob. 
The system $Q$ is in one of the BB84 states $\{H^{f(x,y)}\ket{b}\}$, with $b\in \{0,1\}$ and $f(x,y)\in\{0,1\}$. 
Alice and Bob should both output $b$. 

The $f$-BB84 task can be completed for arbitrary functions $f$, with the most efficient known protocols using $2^{\Theta(n)}$ entanglement and communication for arbitrary functions \cite{buhrman2013garden}.\footnote{Note that \cite{buhrman2013garden} deals with the $f$-routing task, but their protocol is easily adapted to $f$-BB84.} 
Again, we consider separate error rates in $0$ and $1$ instances; a $(\epsilon_0, \epsilon_1)$-correct $f$-BB84 protocol produces $b$ on both sides with probability $1-\epsilon_0$ when $f(x,y)=0$, and produces $b$ on both sides with probability $1-\epsilon_1$ when $f(x,y)=1$. 
We prove lower bounds on the log of the Schmidt rank of any resource state that suffices to complete $f$-BB84 with one-sided error, so that one of $\epsilon_0$ or $\epsilon_1$ is zero. 
In particular, (nearly) the same bound as in the $f$-routing case applies, 
\begin{align*}
    \FBB84_{0}(f) \geq \frac{1}{4}(\log \rank(g|_f)-1)\enspace,\nonumber \\
    \FBB84_{1}(f) \geq \frac{1}{4}(\log \rank(g|_{\neg f})-1)\enspace,
\end{align*}
where again $g$ is a function with zeros exactly at the zeros of $f$ and is positive otherwise, and $\FBB84_{i}(f)$ denotes the Schmidt rank of a resource system needed to complete $f$-BB84 with $\epsilon_i=0$, $i\in\{0,1\}$.
In a practical setting, $f$-BB84 is of particular interest because it can be made loss-tolerant~\cite{allerstorfer2023making}, and doesn't require quantum communication from the honest prover to the verifier. 
Unfortunately, our bound does not carry over to the noisy setting, so is not of immediate practical value.
Nonetheless, our bound is the first non-trivial entanglement bound on $f$-BB84.
For $f$-BB84, the bound above leads to linear lower bounds for the equality, non-equality, set-disjointness, and greater-than functions. 

We can better understand our lower bounds by relating them to communication complexity. 
In a quantum communication complexity protocol, Alice holds $x\in \{0,1\}^n$, Bob holds $y\in \{0,1\}^n$, and Alice and Bob communicate qubits to determine $f(x,y)$. 
The number of qubits of communication needed for Alice and Bob to output $1$ with non-zero probability when $f(x,y)=1$, and never output $1$ when $f(x,y)=0$ is known as the quantum non-deterministic communication complexity, $\QNP^{cc}(f)$. 
Using the characterization of $\QNP^{cc}$ in terms of the non-deterministic rank \cite{de2000characterization,de2003nondeterministic}, we obtain 
\begin{align*}
    \log \rank g|_{f} &\geq (\QNP^{\cc}(f) -1)\enspace, \nonumber \\
    \log \rank g|_{\neg f} &\geq (\coQNP^{\cc}(f) - 1)\enspace.
\end{align*}
This relates our lower bounds on CDS, $f$-routing, and $f$-BB84 to quantum communication complexity.

\section{Preliminaries}

\subsection{Notation and definitions}

We let $\Psi^+_{AB}$ denote the maximally entangled state on $AB$, so that
\begin{align*}
    \ket{\Psi^+}_{AB}=\frac{1}{\sqrt{d}}\sum_{i=1}^{d} \ket{i}_A\ket{i}_B \enspace,
\end{align*}
and $\Psi^+_{AB} = \ketbra{\Psi^+}{\Psi^+}_{AB}$. 

Given two density matrices $\rho,\sigma$, we define the fidelity by
\begin{align*}
    F(\rho,\sigma) = \tr \sqrt{\sqrt{\sigma}\rho\sqrt{\sigma}}\enspace.
\end{align*} 
The fidelity is related to the one-norm distance by the Fuchs--van de Graaf inequalities, 
\begin{align*}
    1-F(\rho,\sigma) \leq \frac{1}{2}||\rho-\sigma||_1 \leq \sqrt{1- (F(\rho,\sigma))^2}\enspace.
\end{align*}

The von Neumann entropy is defined by
\begin{align*}
    S(A) = -\tr (\rho\log \rho)\enspace.
\end{align*}
The quantum mutual information is defined by
\begin{align*}
    I(A:B)_{\rho} = S(A)_\rho + S(B)_\rho - S(AB)_\rho\enspace,
\end{align*}
and the conditional entropy is defined by
\begin{align*}
    S(A|B)_\rho = S(AB)_\rho - S(B)_\rho\enspace.
\end{align*}
We will also make use of the following statement of continuity of the conditional entropy \cite{winter2016tight}, 
\begin{theorem}[\cite{winter2016tight}]\label{eq:condentropycont}
Suppose that
\begin{align*}
    \frac{1}{2}||\rho_{AB}-\sigma_{AB}||_1 \leq \epsilon\enspace,
\end{align*}
and let $h(x)=-x\log x - (1-x)\log (1-x)$ be the binary entropy function. Then
\begin{align*}
    |S(A|B)_\rho - S(A|B)_\sigma| \leq 2\epsilon \log d_A + (1+\epsilon) h\left(\frac{\epsilon}{1+\epsilon}\right)\enspace.
\end{align*}
\end{theorem}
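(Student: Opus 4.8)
The plan is to establish this via the \emph{Winter coupling} argument (the Alicki--Fannes--Winter method), which reduces the continuity estimate to two elementary facts: that conditional entropy satisfies $|H(A|B)_\xi|\le\log d_A$ for any state $\xi$, and that adjoining a classical flag register perturbs it in a controlled way. Throughout I set $\epsilon'=\tfrac12\|\rho_{AB}-\sigma_{AB}\|_1\le\epsilon$ and assume $\epsilon'>0$ (the case $\epsilon'=0$ being trivial), deferring the replacement of $\epsilon'$ by $\epsilon$ to the final step. First I decompose the traceless Hermitian operator $\rho_{AB}-\sigma_{AB}=\Delta_+-\Delta_-$ into its orthogonal positive and negative parts, with $\Delta_\pm\ge0$ and $\tr\Delta_+=\tr\Delta_-=\epsilon'$. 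The key observation is that $\rho_{AB}+\Delta_-=\sigma_{AB}+\Delta_+$, so these coincide as one subnormalized operator; normalizing gives a common state $\omega_{AB}=\tfrac{1}{1+\epsilon'}(\rho_{AB}+\Delta_-)=\tfrac{1}{1+\epsilon'}(\sigma_{AB}+\Delta_+)$.

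Next I introduce a classical flag $X$ and the block-diagonal extensions
\begin{align}
\Omega_{ABX} &= \tfrac{1}{1+\epsilon'}\big(\rho_{AB}\otimes\ketbra{0}{0}_X + \Delta_-\otimes\ketbra{1}{1}_X\big), \\
\Omega'_{ABX} &= \tfrac{1}{1+\epsilon'}\big(\sigma_{AB}\otimes\ketbra{0}{0}_X + \Delta_+\otimes\ketbra{1}{1}_X\big),
\end{align}
which by construction satisfy $\tr_X\Omega_{ABX}=\tr_X\Omega'_{ABX}=\omega_{AB}$ and carry flag distribution $(\tfrac{1}{1+\epsilon'},\tfrac{\epsilon'}{1+\epsilon'})$ of Shannon entropy $h\!\big(\tfrac{\epsilon'}{1+\epsilon'}\big)$. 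Since $X$ is classical, conditioning on it averages the conditional entropy, so $H(A|BX)_\Omega=\tfrac{1}{1+\epsilon'}H(A|B)_\rho+\tfrac{\epsilon'}{1+\epsilon'}H(A|B)_{\Delta_-/\epsilon'}$, and analogously for $\Omega'$ with $\sigma$ and $\Delta_+/\epsilon'$.

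The two analytic inputs I then combine are that conditioning on $X$ cannot increase conditional entropy, $H(A|BX)\le H(A|B)$, and that conversely discarding it costs at most $H(X)$, because $H(A|B)-H(A|BX)=I(A{:}X|B)\le H(X)$ when $X$ is classical. Applying the first to $\Omega$ and the second to $\Omega'$ and subtracting cancels the shared term $H(A|B)_\omega$, leaving
\begin{align}
H(A|B)_\rho - H(A|B)_\sigma \le (1+\epsilon')\,h\!\Big(\tfrac{\epsilon'}{1+\epsilon'}\Big) + \epsilon'\big(H(A|B)_{\Delta_+/\epsilon'}-H(A|B)_{\Delta_-/\epsilon'}\big).
\end{align}
Bounding each normalized-difference term by $|H(A|B)|\le\log d_A$ turns the last bracket into $2\log d_A$, giving $2\epsilon'\log d_A+(1+\epsilon')h(\tfrac{\epsilon'}{1+\epsilon'})$; swapping $\rho\leftrightarrow\sigma$ supplies the matching lower bound and hence the absolute value.

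The main obstacle is the final passage from $\epsilon'$ to $\epsilon$: I must check that the right-hand side $2t\log d_A+(1+t)h\!\big(\tfrac{t}{1+t}\big)$ is nondecreasing in $t\in[0,1]$, so that $\epsilon'\le\epsilon$ permits inflating the bound. The secondary subtleties are orienting the two conditioning inequalities correctly, which hinges on $X$ being genuinely classical so that $H(X|AB)\ge0$, and confirming that the dimensional penalty carries $d_A$ rather than $d_{AB}$ — precisely the reason the stated bound reads $\log d_A$.
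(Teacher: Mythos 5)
Your proposal is correct and follows essentially the same route as the paper, which offers no proof of its own but quotes this statement from \cite{winter2016tight}: your coupling construction (positive/negative parts $\Delta_\pm$, the common state $\omega_{AB}$, the classical flag $X$, concavity-type inequality on one side and $I(A{:}X|B)\le H(X)$ on the other) is precisely Winter's Alicki--Fannes--Winter argument from that reference. The one step you flagged as needing verification does go through: $(1+t)\,h\!\left(\tfrac{t}{1+t}\right) = -t\log t + (1+t)\log(1+t)$ has derivative $\log\!\left(\tfrac{1+t}{t}\right)>0$, so the bound is nondecreasing in $t$ and replacing $\epsilon'$ by $\epsilon$ is legitimate.
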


We will also make use of the following observation.

\begin{observation}\label{clm:trace-sq}
Let $\rho_{AB}$ be a density matrix on the $AB$ system, $d$ be the dimension of subsystem $A$, and $\rho_B$ be the marginal of $\rho_{AB}$ on the subsystem $B$. Then we have
    \begin{align*}
        \tr(\rho_{AB}-\frac{\mathcal{I}_A}{d}\otimes\rho_B)^2
        &= \tr(\rho^2_{AB} +\frac{\mathcal{I}_A}{d^2}\otimes\rho^2_B - 2\rho_{AB}(\frac{\mathcal{I}_A} {d}\otimes \rho_B)) \\
        &= \tr(\rho^2_{AB})+\frac{1}{d^2}\tr(\mathcal{I}_A)\tr(\rho^2_B)-\frac{2}{d}\tr_B(\rho^2_B) \\
        &= \tr(\rho^2_{AB}) - \frac{1}{d}\tr(\rho^2_B) \\
        &= \tr(\rho^2_{AB}) - \tr(\frac{\mathcal{I}_A}{d^2}\otimes \rho^2_B) \enspace.
    \end{align*}
\end{observation}

\subsection{Communication complexity}
We need some definitions and results from communication complexity. 
Consider two parties, Alice and Bob, with Alice given input $x\in \{0,1\}^n$ and Bob given input $y\in \{0,1\}^n$. 
Alice and Bob exchange messages and Alice outputs a bit $z$. 
The \emph{non-deterministic quantum communication complexity} $\QNP^{cc}(f)$, introduced in \cite{de2000characterization}, is the minimal number of qubits Alice and Bob need to exchange to both output $1$ with probability larger than zero when $f(x,y)=1$ and both output $0$ with probability $1$ when $f(x,y)=0$. 

A matrix $M$ is called a non-deterministic communication complexity matrix for $f$ exactly when $M(x,y)\neq 0$ when $f(x,y)=1$. 
The non-zero entries are allowed to be any complex numbers. 
The minimal rank of any non-deterministic communication complexity matrix for $f$ is denoted by $nrank(f)$ and called the non-deterministic rank.
We have the following result from \cite{de2003nondeterministic}. 

\begin{theorem}[\cite{de2003nondeterministic}]\label{thm:QNPandnrank}
The non-deterministic rank and the non-deterministic quantum communication complexity are related by
    \begin{align*}
        \QNP^{\cc}(f) = \lceil{ \log (\textnormal{nrank}(f))\rceil} + 1\enspace.
    \end{align*}
\end{theorem}

\section{Lower bounds from function rank}

Below, we first define and analyzing the $f$-routing task, and then prove our result in this setting.
In the subsequent section we show the same lower bounds for $f$-BB84.

\subsection{\texorpdfstring{$f$}{TEXT}-routing bounds from function rank}

\begin{figure*}
    \centering
    \begin{subfigure}{0.45\textwidth}
    \centering
    \begin{tikzpicture}[scale=0.44]
    
    \draw[thick] (-5,-5) -- (-5,-3) -- (-3,-3) -- (-3,-5) -- (-5,-5);
    \node at (-4,-4) {$\mathcal{N}^x$};
    
    \draw[thick] (5,-5) -- (5,-3) -- (3,-3) -- (3,-5) -- (5,-5);
    \node at (4,-4) {$\mathcal{M}^y$};
    
    \draw[thick] (5,5) -- (5,3) -- (3,3) -- (3,5) -- (5,5);
    \node at (4,4) {$\mathcal{W}^R$};
    
    \draw[thick] (-5,5) -- (-5,3) -- (-3,3) -- (-3,5) -- (-5,5);
    \node at (-4,4) {$\mathcal{W}^L$};
    
    \draw[thick] (-4.5,-3) -- (-4.5,3);
    \node[left] at (-4.5,-2) {$M_0$};
    
    \draw[thick] (4.5,-3) -- (4.5,3);
    \node[right] at (4.5,-2) {$M_1'$};
    
    \draw[thick] (-3.5,-3) to [out=90,in=-90] (3.5,3);
    \node[right] at (-3.25,-2) {$M_0'$};
    
    \draw[thick] (3.5,-3) to [out=90,in=-90] (-3.5,3);
    \node[left] at (3.25,-2) {$M_1$};
    
    \draw[thick] (-3.5,-5) to [out=-90,in=-90] (3.5,-5);
    \draw[black] plot [mark=*, mark size=3] coordinates{(0,-7.05)};
    
    \draw[thick] (-4.5,-6) -- (-4.5,-5);
    \node[below] at (-4.5,-6) {$Q,x$};
    \draw[thick] (4.5,-6) -- (4.5,-5);
    \node[below] at (4.5,-6) {$y$};
    
    \draw[thick] (4.5,5) -- (4.5,6);
    \node[above] at (-4.5,6) {$A$};
    \draw[thick] (-4.5,5) -- (-4.5,6);
    \node[above] at (4.5,6) {$B$};
    
    \end{tikzpicture}
    \caption{}
    \label{fig:fR}
    \end{subfigure}
    \hfill
    \begin{subfigure}{0.45\textwidth}
    \centering
    \begin{tikzpicture}[scale=0.44]
    
    \draw[thick] (-5,-5) -- (-5,-3) -- (-3,-3) -- (-3,-5) -- (-5,-5);
    \node at (-4,-4) {$\mathcal{N}^x$};
    
    \draw[thick] (5,-5) -- (5,-3) -- (3,-3) -- (3,-5) -- (5,-5);
    \node at (4,-4) {$\mathcal{M}^y$};
    
    \draw[thick] (5.5,5) -- (5.5,3) -- (2.5,3) -- (2.5,5) -- (5.5,5);
    \node at (4,4) {\small{$\{\Lambda^{x,y,i}_{M'}\}_i$}};
    
    \draw[thick] (-5.5,5) -- (-5.5,3) -- (-2.5,3) -- (-2.5,5) -- (-5.5,5);
    \node at (-4,4) {\small{$\{\Lambda^{x,y,i}_M\}_i$}};
    
    \draw[thick] (-4.5,-3) -- (-4.5,3);
    \node[left] at (-4.5,-2) {$M_0$};
    
    \draw[thick] (4.5,-3) -- (4.5,3);
    \node[right] at (4.5,-2) {$M_1'$};
    
    \draw[thick] (-3.5,-3) to [out=90,in=-90] (3.5,3);
    \node[right] at (-3.25,-2) {$M_0'$};
    
    \draw[thick] (3.5,-3) to [out=90,in=-90] (-3.5,3);
    \node[left] at (3.25,-2) {$M_1$};
    
    \draw[thick] (-3.5,-5) to [out=-90,in=-90] (3.5,-5);
    \draw[black] plot [mark=*, mark size=3] coordinates{(0,-7.05)};
    
    \draw[thick] (-4.5,-6) -- (-4.5,-5);
    \node[below] at (-4.5,-6) {$Q,x$};
    \draw[thick] (4.5,-6) -- (4.5,-5);
    \node[below] at (4.5,-6) {$y$};
    
    \draw[thick] (4.5,5) -- (4.5,6);
    \node[above] at (-4.5,6) {$b'$};
    \draw[thick] (-4.5,5) -- (-4.5,6);
    \node[above] at (4.5,6) {$b''$};
    
    \end{tikzpicture}
    \caption{}
    \label{fig:fBB84}
    \end{subfigure}
    \caption{a) General NLQC implementing a $f$-routing position-verification scheme. The first round operations are quantum channels that depend on the inputs $x\in \{0,1\}^n$ and $y\in\{0,1\}^n$. In the second round, Alice and Bob apply channels mapping to qubit systems $A$, $B$. If $f(x,y)=0$ $A$ should be maximally entangled with the reference $\bar{Q}$. If $f(x,y)=1$ then $B$ should be maximally entangled with $R$. b) Non-local quantum computation implementing a $f$-BB84 position-verification scheme. The first round operations are quantum channels that depend on the inputs $x\in \{0,1\}^n$ and $y\in\{0,1\}^n$. They communicate in one simultaneous exchange, and then apply measurements to their local systems. They succeed if $b=b'=b''$, with $b$ determined by measuring a reference maximally entangled with $Q$.}
    \label{fig:fBB84andfR}
\end{figure*}

We begin with a definition of an $f$-routing task. 

\begin{definition}\label{def:qubitfrouting}
    A \textbf{qubit $f$-routing} task is defined by a choice of Boolean function $f:\{ 0,1\}^{2n}\rightarrow \{0,1\}$, and a $2$ dimensional Hilbert space $\mathcal{H}_Q$.
    Inputs $x\in \{0,1\}^{n}$ and system $Q$ are given to Alice, and input $y\in \{0,1\}^{n}$ is given to Bob.
    Alice and Bob exchange one round of communication, with the combined systems received or kept by Bob labelled $M'$ and the systems received or kept by Alice labelled $M$.
    Label the combined actions of Alice and Bob in the first round as $\mathcal{N}^{x,y}_{Q\rightarrow MM'}$. 
    The qubit $f$-routing task is completed $(\epsilon_0,\epsilon_1)$-correctly on an input $(x,y)$ if there exists a channel $\mathcal{D}^{x,y}_{M\rightarrow Q}$ such that,
    \begin{align*}
         \text{when} \,\,\, f(x,y)=0,\,\,\, F(\mathcal{D}^{x,y}_{M\rightarrow Q} \circ\tr_{M'} \circ\mathcal{N}^{x,y}_{Q\rightarrow MM'}(\Psi^+_{\bar{Q}Q}),\Psi^+_{\bar{Q}Q}) \geq 1-\epsilon_0\enspace,
    \end{align*}
    and there exists a channel $\mathcal{D}^{x,y}_{M'\rightarrow Q}$ such that
    \begin{align*}
        \text{when}\,\,\, f(x,y)=1,\,\,\,F(\mathcal{D}^{x,y}_{M'\rightarrow Q} \circ\tr_{M} \circ\mathcal{N}^{x,y}_{Q\rightarrow MM'}(\Psi^+_{\bar{Q}Q}),\Psi^+_{\bar{Q}Q}) \geq 1-\epsilon_1\enspace.
    \end{align*}
    In words, Alice can recover $Q$ if $f(x,y)=0$ and Bob can recover $Q$ if ${f(x,y)=1}$. 
\end{definition}
See the system labels shown in \cref{fig:fR}. 
Note that we label $M_0M_1=M$, $M_0'M_1'=M'$. 

We will focus on $f$-routing with $\epsilon_0=0$, $\epsilon_1\geq 0$ so that the protocol is perfectly correct on zero instances of $f(x,y)$, or with $\epsilon_0>0$ and $\epsilon_1=0$ so that the protocol is perfectly correct on $1$ instances. 
We define the entanglement cost of an $f$-routing protocol to be the log of the minimal Schmidt rank of any resource system that can be used to perform the $f$-routing task. 
In notation, we define $\FR_0(f)$ to be entanglement cost for $f$-routing with $\epsilon_0=0$, $\epsilon_1=0.05$, $\FR_1(f)$ to be the entanglement cost when $\epsilon_0=0.05$, $\epsilon_1=0$.

Towards a lower bound in this setting, we first observe that $\rho_{\bar{Q}M'}=\rho_{\bar{Q}}\otimes \rho_{M'}$ iff $f(x,y)=0$. We show this in the following lemma.
\begin{lemma}\label{lemma:f=0tensorproduct}
    Suppose an $f$-routing protocol is perfectly correct on zero instances, and $\epsilon_1<\epsilon_1^*$ correct on $1$ instances, where $\epsilon_1^*$ depends on $d_Q$ but is lower bounded by a constant. Then the mid-protocol state it produces, $\rho_{\bar{Q}M'}$, satisfies $\rho_{\bar{Q}M'}=\rho_{\bar{Q}}\otimes \rho_{M'}$ if and only if $f(x,y)=0$.
\end{lemma}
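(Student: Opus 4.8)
The plan is to prove the two directions separately, tracking in both cases the quantum mutual information $I(R:M')$ between the reference and Bob's post-round system: since $I(R:M')=0$ is equivalent to $\rho_{RM'}=\rho_R\otimes\rho_{M'}$, it suffices to show $I(R:M')=0$ exactly when $f(x,y)=0$ and $I(R:M')>0$ when $f(x,y)=1$. Throughout I would assume without loss of generality that the global mid-protocol state $\rho_{RMM'}$ is pure, by Stinespring-dilating Alice's and Bob's channels and absorbing the dilating ancillas into $M$ and $M'$ respectively; tracing these ancillas out at the end only decreases $I(R:M')$ by data processing and leaves the recovery maps intact, so nothing is lost. I also record two facts: since $\mathcal{N}^{x,y}$ acts only on $Q$, the reference stays maximally mixed, $\rho_R=\id_R/d_Q$ and $S(R)=\log d_Q$; and for a pure tripartite state one has the identity $I(R:M)+I(R:M')=2S(R)$, immediate from $S(RM)=S(M')$ and $S(RM')=S(M)$.

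For $f(x,y)=0\Rightarrow$ product, I would use perfect recovery from $M$. Since $\epsilon_0=0$, the recovery channel satisfies $\mathcal{D}^{x,y}_{M\to Q}(\rho_{RM})=\Psi^+_{RQ}$ exactly. Data processing for the mutual information under $\mathcal{D}^{x,y}_{M\to Q}$ then gives $I(R:M)_\rho \ge I(R:Q)_{\Psi^+}=2\log d_Q$, while the general bound $I(R:M)\le 2S(R)=2\log d_Q$ forces equality $I(R:M)_\rho=2\log d_Q$. Substituting into the purity identity yields $I(R:M')_\rho=2S(R)-I(R:M)_\rho=0$, which is precisely $\rho_{RM'}=\rho_R\otimes\rho_{M'}$.

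For the converse I would prove the contrapositive $f(x,y)=1\Rightarrow \rho_{RM'}\ne\rho_R\otimes\rho_{M'}$ by showing $I(R:M')>0$, using near-perfect recovery from $M'$. Writing $\sigma_{RQ}=\mathcal{D}^{x,y}_{M'\to Q}(\rho_{RM'})$, correctness gives $F(\sigma_{RQ},\Psi^+_{RQ})\ge 1-\epsilon_1$, so by Fuchs--van de Graaf $\tfrac12\|\sigma_{RQ}-\Psi^+_{RQ}\|_1\le\sqrt{2\epsilon_1}=:\delta$. Since $\mathcal{D}$ does not touch $R$ we still have $S(R)_\sigma=\log d_Q$, and applying Theorem~\ref{eq:condentropycont} to the conditional entropy $S(R|Q)$ (with $S(R|Q)_{\Psi^+}=-\log d_Q$) gives $I(R:Q)_\sigma=S(R)_\sigma-S(R|Q)_\sigma\ge 2\log d_Q-2\delta\log d_Q-(1+\delta)h\!\big(\tfrac{\delta}{1+\delta}\big)$. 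Data processing then gives $I(R:M')_\rho\ge I(R:Q)_\sigma$, and this lower bound is strictly positive whenever $2\log d_Q(1-\delta)>(1+\delta)h(\delta/(1+\delta))$, i.e. whenever $\epsilon_1$ lies below a threshold $\epsilon_1^*$; this threshold depends on $d_Q$ only through the factor $\log d_Q$, is nondecreasing in $d_Q$, and hence is bounded below by a positive constant.

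I expect the main obstacle to be the direction-2 estimate: pinning down $\epsilon_1^*$ explicitly and confirming it is bounded below independently of the protocol requires carefully combining the Fuchs--van de Graaf conversion with the continuity statement and verifying that the inequality $2\log d_Q(1-\delta)>(1+\delta)h(\delta/(1+\delta))$ really cuts out a $d_Q$-dependent but constant-bounded window of admissible $\epsilon_1$. A secondary point needing care is the dilation bookkeeping in direction~1, namely checking that absorbing the dilating ancillas into $M,M'$ preserves both the perfect-recovery hypothesis and the purity identity, so that the conclusion $I(R:M')=0$ descends to the original, possibly mixed, state.
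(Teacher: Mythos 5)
Your proposal is correct and follows essentially the same route as the paper: data processing of mutual information plus the monogamy relation $I(R:M)+I(R:M')\le 2S(R)$ for the $f(x,y)=0$ direction, and Fuchs--van de Graaf plus the conditional-entropy continuity bound (Theorem \ref{eq:condentropycont}) applied to $\sigma_{RQ}=\mathcal{D}^{x,y}_{M'\to Q}(\rho_{RM'})$ to force $I(R:M')>0$ in the $f(x,y)=1$ direction. The only cosmetic difference is that you derive the monogamy relation as an equality for a purified mid-protocol state and then descend to the original state by data processing, whereas the paper invokes the inequality $I(R:M)+I(R:M')\le 2S(R)$ directly for the (possibly mixed) state, which holds by weak monotonicity (equivalently, strong subadditivity), so the dilation bookkeeping you flag as a concern can be avoided entirely.
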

\begin{proof}\,
    By correctness on $0$ instances, we have that 
    \begin{align*}
        \text{when}\,\,\, f(x,y)=0,\,\,\,F(\mathcal{D}^{x,y}_{M\rightarrow Q} \circ\tr_{M'} \circ\mathcal{N}^{x,y}_{Q\rightarrow MM'}(\Psi^+_{\bar{Q}Q}),\Psi^+_{\bar{Q}Q}) =1\enspace,
    \end{align*}
    hence $f(x,y)=0$ implies Alice can produce a Bell state $\ket{\Psi^+}_{\bar{Q}Q}$. 
    By data processing we have
    \begin{align*}
        2n=I(\bar{Q}:Q)_{\Psi^+_{\bar{Q}Q}} \leq I(\bar{Q}:M)_{\rho_{\bar{Q}M}}\enspace, 
    \end{align*}
    but we also have the inequality
    \begin{align*}
        I(\bar{Q}:M)_\rho + I(\bar{Q}:M')_\rho \leq 2S(\bar{Q})\enspace,
    \end{align*}
    so that $I(\bar{Q}:M')=0$, which implies
    \begin{align*}
        \rho_{\bar{Q}M'}=\rho_{\bar{Q}}\otimes \rho_{M'}\enspace,
    \end{align*}
    as needed. 
    Conversely, if $f(x,y)=1$ we cannot have $\rho_{\bar{Q}M'}=\rho_{\bar{Q}}\otimes \rho_{M'}$. 
    To see why, first recall that $f$-routing which is $\epsilon_1$-correct on $1$ instances has that there exists a family of channels $\{D^{x,y}_{M'\rightarrow Q}\}_{x,y}$ such that
    \begin{align*}
        \text{when}\,\,\, f(x,y)=1,\,\,\,F(\mathcal{D}^{x,y}_{M'\rightarrow Q} \circ\tr_{M} \circ\mathcal{N}^{x,y}_{Q\rightarrow MM'}(\Psi^+_{\bar{Q}Q}),\Psi^+_{\bar{Q}Q}) \geq 1-\epsilon_1\enspace.
    \end{align*}
    Define 
    \begin{align*}
        \sigma_{\bar{Q}Q} = \mathcal{D}^{x,y}_{M'\rightarrow Q}(\rho_{\bar{Q}M'})\enspace,
    \end{align*} 
    and use that by data processing, 
    \begin{align*}
        I(\bar{Q}:M')_{\rho_{\bar{Q}M}} \geq I(\bar{Q}:Q)_{\sigma_{\bar{Q}Q}}\enspace.
    \end{align*}
    But then by the Fuchs--van de Graaf inequalities $\sigma_{\bar{Q}Q}$ is close in trace distance to $\Psi^+_{\bar{Q}Q}$,
    \begin{align*}
        \frac{1}{2}||\Psi^+_{\bar{Q}Q} - \sigma_{\bar{Q}Q}||_1\leq \sqrt{2\epsilon_1 - \epsilon_1^2} \equiv \mu \enspace,
    \end{align*}
    and by continuity of the conditional entropy, 
    \begin{align*}
        I(\bar{Q}:Q)_{\Psi^+} - I(\bar{Q}:Q)_{\sigma_{\bar{Q}Q}} &= S(\bar{Q}|Q)_{\Psi^+} - S(\bar{Q}|Q)_{\sigma} \nonumber \\
        &\leq 2 \mu \log d_{\bar{Q}} + (1+\mu) h\left( \frac{\mu}{1+\mu}\right) \enspace,
    \end{align*}
    so that
    \begin{align*}
        2\log d_{\bar{Q}} - 2 \mu \log d_{\bar{Q}} + (1+\mu) h\left( \frac{\mu}{1+\mu}\right) \leq I(\bar{Q}:Q)_{\sigma} \enspace.
    \end{align*}
    We have that $\sigma$ is not a tensor product whenever the left hand side of the equation above is strictly positive. 
    For $d_Q=2$, this occurs for $\epsilon_1 > \epsilon_1^* \approx 0.08$, and $\epsilon_1^*$ approaches $1$ as $d_Q\rightarrow \infty$. 
\end{proof}

Next, we define what we call a \emph{structure function} for a protocol. 
The structure function is computed from the mid-protocol density matrix $\rho_{\bar{Q}MM'}$, and (as we will see) captures the zeros in $f(x,y)$.
\begin{definition}\label{def:structurefunction}
    Given an $f$-routing protocol with mid-protocol density matrix $\rho_{\bar{Q}MM'}$, define the \textbf{structure function} $g(x,y)$ according to
    \begin{align*}
        g(x,y) = \tr(\rho_{\bar{Q}M'}-\frac{\mathcal{I}}{d_{\bar{Q}}}\otimes \rho_{M'})^2\enspace.
    \end{align*}
\end{definition}
We claim that $g(x,y)$ captures some aspect of the structure in the function $f$ which must be present in a correct $f$-routing protocol. 
More concretely we have the following. 
\begin{lemma}
    In a perfectly correct $f$-routing protocol, the structure function $g(x,y)$ is zero if and only if $f(x,y)=0$.
\end{lemma}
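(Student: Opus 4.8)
The plan is to reduce this statement to the previous Lemma~\ref{lemma:f=0tensorproduct}, which already establishes that $\rho_{RM'}=\rho_R\otimes\rho_{M'}$ if and only if $f(x,y)=0$. The structure function $g(x,y)$ is nothing but the squared Hilbert--Schmidt norm of the difference $\rho_{RM'}-\frac{\mathcal{I}}{d_{\bar Q}}\otimes\rho_{M'}$, so the entire content of the claim is to identify the reference product state $\frac{\mathcal{I}}{d_{\bar Q}}\otimes\rho_{M'}$ with $\rho_R\otimes\rho_{M'}$ and then invoke the earlier biconditional.

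First I would observe that $g(x,y)$ is the trace of the square of a Hermitian operator, hence $g(x,y)\ge 0$, with equality exactly when the operator itself vanishes; that is,
\begin{align}
g(x,y)=0 \iff \rho_{RM'}=\frac{\mathcal{I}}{d_{\bar Q}}\otimes\rho_{M'}.
\end{align}
Second, I would argue that the reduced state on the reference is maximally mixed throughout the protocol. This is because $R$ is a purifying reference on which neither Alice nor Bob ever act: the first round channel $\mathcal{N}^{x,y}_{Q\to MM'}$ touches only $Q$ together with the shared entanglement, while $R$ is carried along untouched from the initial maximally entangled state $\Psi^+_{RQ}$. Consequently $\rho_R=\tr_{MM'}\rho_{RMM'}=\frac{\mathcal{I}}{d_{\bar Q}}$, so that $\frac{\mathcal{I}}{d_{\bar Q}}\otimes\rho_{M'}=\rho_R\otimes\rho_{M'}$.

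Combining these two observations gives $g(x,y)=0\iff \rho_{RM'}=\rho_R\otimes\rho_{M'}$. Since a perfectly correct protocol has $\epsilon_0=\epsilon_1=0$, the hypotheses of Lemma~\ref{lemma:f=0tensorproduct} (perfect correctness on zero instances together with $\epsilon_1<\epsilon_1^*$ on one instances) are satisfied, and the right-hand side is equivalent to $f(x,y)=0$, completing the argument.

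The only real subtlety --- and the step where I would be most careful --- is the second one: justifying that $\rho_R$ equals the maximally mixed state and therefore coincides with the $\mathcal{I}/d_{\bar Q}$ appearing in the definition of $g$. This hinges on the structural fact that the reference $R$ is inert under the protocol, together with matching the dimension $d_{\bar Q}$ to $d_R$. Once this identification is made, the remainder is immediate from the positivity of the Hilbert--Schmidt norm and the previously established lemma.
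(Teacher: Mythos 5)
Your proof is correct and follows essentially the same route as the paper's: observe that $g(x,y)=0$ iff $\rho_{RM'}=\frac{\mathcal{I}}{d_{\bar Q}}\otimes\rho_{M'}$, then invoke Lemma~\ref{lemma:f=0tensorproduct}. You are in fact more careful than the paper, which silently identifies $\frac{\mathcal{I}}{d_{\bar Q}}\otimes\rho_{M'}$ with $\rho_R\otimes\rho_{M'}$; your explicit justification that $R$ is inert so $\rho_R$ remains maximally mixed is a worthwhile addition.
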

\begin{proof}\,
    This follows because $g(x,y)=0$ is zero if and only if $\rho_{\bar{Q}M'}=\frac{\mathcal{I}}{d_{\bar{Q}}}\otimes \rho_{M'}$, and \cref{lemma:f=0tensorproduct} shows we have this tensor product form if and only if $f(x,y)=0$. 
\end{proof}

Our next job is to relate the function $g(x,y)$ to the entanglement available to Alice and Bob. 
We prove the following lemma. 
\begin{lemma}
    An $f$-routing protocol that uses a resource system with Schmidt rank $d_E$ has a structure function of the form
    \begin{align*}
        g(x,y) = \sum_{I=1}^{d_E^4} f_I(x)f_I'(y)\enspace.
    \end{align*}
\end{lemma}
\begin{proof}\,
    From the general form of a non-local quantum computation protocol, the density matrix $\rho_{\bar{Q}M_0'M_1'}(x,y)$ can be expressed as
    \begin{align*}
        \rho_{\bar{Q}M_0'M_1'}=\mathcal{N}^x_{QL\rightarrow M_0'}\otimes \mathcal{M}^y_{\bar{Q}\rightarrow M_1'}(\Psi^+_{\bar{Q}Q}\otimes \ketbra{\Psi}{\Psi}_{LR})\enspace.
    \end{align*}
    We will write $\ket{\Psi}_{LR}$ in the Schmidt basis, 
    \begin{align*}
        \ket{\Psi}_{LR} = \sum_{i=1}^{d_E} \ket{i}_L\ket{i}_R\enspace,
    \end{align*}
    with un-normalized vectors $\ket{i}_L, \ket{i}_R$. 
    Then we get
    \begin{align*}
        \rho_{\bar{Q}M_0'M_1'}&=\sum_{i,j=1}^{d_E}\mathcal{N}^x_{QL\rightarrow M_0'}(\Psi^+_{Q\bar{Q}}\otimes \ketbra{i}{j}_L)\otimes \mathcal{M}^y_{R\rightarrow M_1'}(\ketbra{i}{j}_{R}) \nonumber \\
        &= \sum_{i,j=1}^{d_E} A^{x,i,j}_{\bar{Q}M_0'}\otimes B^{y,i,j}_{M_1'}\enspace.
    \end{align*}
    We can also compute the trace over $\bar{Q}$ where we define $A$ and $B$ in the second line of the previous equation and get
    \begin{align*}
        \rho_{M_0'M_1'} &= \sum_{i,j=1}^{d_E} A^{x,i,j}_{M_0'}\otimes B^{y,i,j}_{M_1'}\enspace.
    \end{align*}
    Next, we compute $g(x,y)$. 
    It is convenient to make use of \cref{clm:trace-sq} to write
    \begin{align*}
        g(x,y) = \tr(\rho_{\bar{Q}M'}^2 - \frac{\mathcal{I}}{d_{\bar{Q}}^2}\otimes  \rho_{M'}^2)\enspace.
    \end{align*}
    Inserting the forms of $\rho_{\bar{Q}M'}$ and $\rho_{M'}$ into this, we obtain
    \begin{align*}
        g(x,y) &= \sum_{i,j,i',j'=1}^{d_E} \tr\left(\left( A^{x,i,j}_{QM_0'}A^{x,i',j'}_{QM_0'}  - \frac{\mathcal{I}}{d_Q^2}\otimes A^{x,i,j}_{M_0'}A^{x,i',j'}_{M_0'}\right)\otimes B^{y,i,j}_{M_1'}B^{y,i',j'}_{M_1'}\right) \nonumber \\
        &= \sum_{I=1}^{d_E^4} f_I(x) f'_I(y)\enspace,
    \end{align*}
    as needed. 
\end{proof}

We can view $g(x,y)$ as a matrix, and $f_I(x)$, $f_I'(y)$ as vectors so that the minimal number of terms appearing in this sum is the rank of the matrix $g(x,y)$. 
Thus we obtain the lower bound
\begin{align}\label{eq:rankbound}
    \boxed{\FR_0(f) \geq \frac{1}{4} \log \rank g|_f\enspace.}
\end{align}
We know that $g$ is zero and non-zero for the same inputs as $f$, for which we have given the $g|_f$ notation as a reminder. In some cases, this is enough to let us lower bound the rank of $g$, as we discuss below. 

We can also notice that we can reverse the role of $M$ and $M'$, and assume perfect correctness on $1$ instances, leading to a similar bound, 
\begin{align}\label{eq:rankboundnegation}
    \boxed{\FR_1(f) \geq \frac{1}{4} \log \rank g|_{\neg f}\enspace,}
\end{align}
where $\neg f$ is the negation of $f$. 

We can also rephrase the above bounds in terms of the non-deterministic quantum communication complexity. 
Observe that 
\begin{align}\label{eq:gfandnrank}
    \rank g|_f \geq \text{nrank}(f)\enspace. 
\end{align}
Because the non-deterministic rank considers matrices with arbitrary complex values where $f(x,y)=1$, while $g|_f$ has the further constraint that it has positive real values when $f(x,y)=1$, it is possible there are functions for which $\rank g|_f > \text{nrank}(f)$.
We have not looked for such examples. 

Using \cref{eq:gfandnrank} and \cref{thm:QNPandnrank}, we obtain
\begin{align*}
    \FR_0(f) &\geq \frac{1}{4}(\QNP^{\cc}(f) -1)\enspace, \nonumber \\
    \FR_1(f) &\geq \frac{1}{4}(\coQNP^{\cc}(f) - 1)\enspace.
\end{align*}
Further, for $f$-routing which has two sided perfection, we have
\begin{align*}
    \text{p}\FR(f) \geq \frac{1}{4}\max\{\QNP^{\cc}(f), \coQNP^{\cc}(f) \} -1\enspace.
\end{align*}
Or, labelling the set of function families which can be $f$-routed on with polylogarithmic entanglement by $pFR$, we have
\begin{align*}
    \text{p}\FR \subseteq \QNP^{\cc} \cap \coQNP^{\cc}\enspace.
\end{align*}
In the classical setting, $\NP^{\cc}\cap \coNP^{\cc}=\text{P}^{\cc}$ \cite{babai1986complexity}, but (to our knowledge) it is not known if the quantum analogue holds. 

\subsection{\texorpdfstring{$f$}{TEXT}-BB84 bounds}

We first define the $f$-BB84 task formally.
\begin{definition}\label{def:qubitfbb84}
    A \textbf{qubit $f$-BB84} task is defined by a choice of Boolean function $f:\{ 0,1\}^{2n}\rightarrow \{0,1\}$, and a $2$ dimensional Hilbert space $\mathcal{H}_Q$.
    Inputs $x\in \{0,1\}^{n}$ and system $Q$ are given to Alice, and input $y\in \{0,1\}^{n}$ is given to Bob.
    The $Q$ system is in the maximally entangled state with a reference $\bar{Q}$.
    Alice and Bob exchange one round of communication, with the combined systems received or kept by Bob labelled $M$ and the systems received or kept by Alice labelled $M'$.
    Define projectors
    \begin{align*}
        \Pi^{q,b}=H^q\ketbra{b}{b}H^q \enspace.
    \end{align*}
    The qubit $f$-BB84 task is completed $\epsilon$-correctly on input $(x,y)$ if there exist POVM's $\{\Lambda^{x,y,0}_{M},\Lambda^{x,y,1}_{M}\}$, $\{\Lambda^{x,y,0}_{M'},\Lambda^{x,y,1}_{M'}\}$ such that,
    \begin{align*}
        \tr(\Pi^{f(x,y),b}_{\bar{Q}}\otimes \Lambda^{x,y,b}_{M} \otimes \Lambda^{x,y,b}_{M} \rho_{\bar{Q}MM'}) \geq 1-\epsilon \enspace.
    \end{align*}
    That is, Alice and Bob succeed when they both obtain the same outcome as the referee.
    We will say a protocol is $(\epsilon_0,\epsilon_1)$-correct if the above holds with $\epsilon\leq \epsilon_0$ for all $(x,y)\in f^{-1}(0)$ and with $\epsilon\leq \epsilon_1$ for all $(x,y)\in f^{-1}(1)$. 
\end{definition}

As in the analysis of $f$-routing, we look for a function of the density matrix prepared by the protocol which captures the structure of the function $f(x,y)$. 
We define this next. 
\begin{definition}
    Given a $f$-BB84 protocol with mid-protocol density matrix $\rho_{\bar{Q}MM'}$, define the structure function by first defining the density matrices
    \begin{align*}
        \rho^0_M &= \bra{0}_{\bar{Q}} \rho_{\bar{Q}M} \ket{0}_{\bar{Q}}\enspace, \nonumber \\
        \rho^1_M &= \bra{1}_{\bar{Q}} \rho_{\bar{Q}M} \ket{1}_{\bar{Q}}\enspace.
    \end{align*}
    These are the post measurement states on $M$ when the referee measures $R$ in the computational basis, and finds outcome $b=0,1$.
    We similarly define $\rho^0_{M'}$ and $\rho^1_{M'}$.
    Then define
    \begin{align*}
        g(x,y) = \tr(\rho_{M}^0\rho_M^1) + \tr(\rho_{M'}^0\rho_{M'}^1)\enspace.
    \end{align*}
\end{definition}
Our goal is to show that $g(x,y)$ and $f(x,y)$ are zero on exactly the same inputs. 
To do this, we need some results from earlier analyses of $f$-BB84. 
In particular we recall the following definition from \cite{bluhm2021position,asadi2024linear} (our notation follows \cite{asadi2024linear}). 
\begin{definition}\label{def:01setBB84}
    For $\epsilon \in [0,1]$, let $S^{\epsilon}_{0}$ be the set of states $\rho_{\bar{Q}MM'}$ such that there exists a measurement on subsystem $M$ and a measurement on subsystem $M'$ that each allow us to guess the outcome of a measurement in the computational basis on $\bar{Q}$ with probability at least $1-\epsilon$. 
    Similarly, let $S^{\epsilon}_{1}$ be the set of states $\rho_{\bar{Q}MM'}$ such that there exists a measurement on subsystem $M$ and a measurement on subsystem $M'$ that allows us to guess the outcome of a measurement in the Hadamard basis on $\bar{Q}$ with probability at least $1-\epsilon$. 
\end{definition}
With this definition we can state the following result from \cite{bluhm2021position,asadi2024linear}. 
\begin{lemma}\label{lemma:fBB84emptyintersection}
    Suppose that $\epsilon<0.11$. 
    Then $S^{\epsilon}_{0} \cap S^{\epsilon}_{1} = \emptyset$.
\end{lemma}
This result says that a state which works well in $1$ instances cannot also be a state that works well in $0$ instances, and vice versa. 
We use this idea to prove the next lemma. 
\begin{lemma}
    In a $f$-BB84 protocol with $\epsilon_0=0$ and $0\leq \epsilon_1<0.11$, the structure function $g(x,y)$ is zero if and only if $f(x,y)$ is zero. 
\end{lemma}
\begin{proof}\,
    We first establish the following fact: $g(x,y)=0$ if and only if both the following are true: $\rho_M^0$ and $\rho_M^1$ have non-overlapping support, and $\rho_{M'}^0$ and $\rho_{M'}^1$ have non-overlapping support. 
    To see this, first assume $g(x,y)=0$.
    Then diagonalize $\rho^0_M$ and $\rho_M^1$ so that
    \begin{align*}
        \rho^0_M &= \sum_{i} \lambda_i\ketbra{\psi_i}{\psi_i}\enspace, \nonumber \\
        \rho^1_M &= \sum_{j} \mu_j\ketbra{\phi_j}{\phi_j}\enspace,\nonumber \\
        \rho^0_{M'} &= \sum_{i} \lambda'_i\ketbra{\psi'_i}{\psi'_i}\enspace, \nonumber \\
        \rho^1_{M'} &= \sum_{j} \mu'_j\ketbra{\phi'_j}{\phi'_j}\enspace,
    \end{align*}
    with $\lambda_i, \mu_j, \lambda_j', \mu_j'>0$. 
    Then $0=g(x,y)$ gives that
    \begin{align*}
        0 =\tr(\rho^0_M\rho^1_M) + \tr(\rho^0_{M'}\rho^1_{M'})= \sum_{i,j} \lambda_i \mu_j |\braket{\phi_j}{\psi_i}|^2 + \sum_{i,j} \lambda_i' \mu_j' |\braket{\phi'_j}{\psi'_i}|^2\enspace.
    \end{align*}
    But this occurs if and only if 
    \begin{align*}
        \forall i,j,\,\,\,\,\, \braket{\phi_j}{\psi_i}=0\enspace,
    \end{align*}
    and
    \begin{align*}
        \forall i,j,\,\,\,\,\, \braket{\phi'_j}{\psi'_i}=0\enspace,
    \end{align*}
    so that $\rho_M^0$ and $\rho^1_M$ have non-overlapping support, and $\rho_{M'}^0$ and $\rho_{M'}^1$ do, as needed. 
    For the converse, a direct calculation shows both $\tr(\rho^0_M\rho^1_M)$ and $\tr(\rho^0_{M'}\rho^1_{M'})$ are 0. 

    Now, suppose that $f(x,y)=0$. 
    By assumption the protocol has $\epsilon_0=0$.
    Considering Alice, this gives that Alice can obtain $b$ from system $M$ with probability $1$. 
    But this says that $\rho_M^0$ and $\rho_M^1$ must have orthogonal support. 
    Considering Bob, perfect correctness also gives that Bob can determine $b$ from $M'$ with probability $1$, so that $\rho_{M'}^0$ and $\rho_{M'}^1$ have orthogonal support. 
    Then, from the comment above, we have that $g(x,y)=0$.

    Next suppose that $f(x,y) = 1$. 
    By assumption, this gives that $\rho_{\bar{Q}MM'}\in S_1^\epsilon$, so by \cref{lemma:fBB84emptyintersection} that $\rho_{\bar{Q}MM'}\notin S^\epsilon_0$. 
    This means that either $\rho_M^0$ and $\rho_M^1$ have overlapping support, $\rho_{M'}^0$ and $\rho_{M'}^1$ have overlapping support, or they both do. 
    This is because otherwise if they both have non-overlapping support there would be a measurement to perfectly determine $b$ from both $M$ and $M'$, and the state would be in $S_0^\epsilon$. 
    But then by our initial comment that $g(x,y)=0$ if and only if these density matrices have orthogonal support, we must have $g(x,y)> 0$. 
\end{proof}

\begin{lemma}
    An $f$-BB84 protocol that uses a resource system with Schmidt rank $d_E$ has a structure function of the form
    \begin{align*}
        g(x,y) = \sum_{I=1}^{2d_E^4} f_I(x)f_I'(y)\enspace.
    \end{align*}
\end{lemma} 
\begin{proof}\,
From the general form of a non-local quantum computation protocol, the density matrix $\rho_{\bar{Q}M_0'M_1'}(x,y)$ can be expressed as
\begin{align*}
    \rho_{\bar{Q}M_0'M_1'}=\mathcal{N}^x_{QL\rightarrow M_0'}\otimes \mathcal{M}^y_{R\rightarrow M_1'}(\Psi_{\bar{Q}Q}\otimes \ketbra{\Psi}{\Psi}_{LR})\enspace.
\end{align*}
We will write $\ket{\Psi}_{LR}$ in the Schmidt basis, 
\begin{align*}
    \ket{\Psi}_{LR} = \sum_{i=1}^{d_E} \ket{i}_L\ket{i}_R\enspace,
\end{align*}
with un-normalized vectors $\ket{i}_L, \ket{i}_R$. 
Then we get
\begin{align*}
    \rho_{\bar{Q}M_0'M_1'}&=\sum_{i,j=1}^{d_E}\mathcal{N}^x_{QL\rightarrow M_0'}(\Psi^+_{\bar{Q}Q}\otimes \ketbra{i}{j}_L)\otimes \mathcal{M}^y_{R\rightarrow M_1'}(\ketbra{i}{j}_{R}) \nonumber \\
        &= \sum_{i,j=1}^{d_E} A^{x,i,j}_{\bar{Q}M_0'}\otimes B^{y,i,j}_{M_1'}\enspace.
\end{align*}
We can also project $\bar{Q}$ into $\ket{b}_{\bar{Q}}$, $b\in \{0,1\}$ to get $\rho_{M'}^i$, 
\begin{align*}
    \rho_{M_0'M_1'}^b &= \sum_{i,j=1}^{d_E} A^{b,x,i,j}_{M_0'}\otimes B^{y,i,j}_{M_1'}\enspace.
\end{align*}
A similar calculation gives $\rho_M^{b}$. 
Inserting these expressions into $g(x,y)$, we obtain an expression of the form
\begin{align*}
    g(x,y) = \sum_{I=1}^{2d_E^4} f_I(x) f'_I(y)\enspace.
\end{align*}
where the $d_E^4$ appears because each of the two trace terms in $g(x,y)$ comes with a product of two density matrices, each of which involves two sums running from $1$ to $d_E$. 
We collect all of these sums into the index $i$ appearing here, which now runs 1 to $2 d_E^4$. 
\end{proof}

Similarly to the case of $f$-routing, the above lemma implies that
\begin{align}\label{eq:fBB84bound}
    \boxed{\FBB84_{0}(f) \geq \frac{1}{4}( \log \rank g|_f-1)\enspace.}
\end{align}
where the notation $g|_f$ indicates as before that $g$ is zero and non-zero on the same inputs as $f(x,y)$. 
We can also obtain a lower bound on $\FBB84_{1}(f)$ by considering the negated function, 
\begin{align}
    \boxed{\FBB84_{1}(f) \geq \frac{1}{4}( \log \rank g|_{\neg f}-1)\enspace.}
\end{align}
Also similarly to the case of $f$-routing, we can use \cref{eq:gfandnrank} and \cref{thm:QNPandnrank}, we obtain
\begin{align*}
     \FBB84_0(f) &\geq \frac{1}{4}(\QNP^{\cc}(f) -2)\nonumber \\
     \FBB84_1(f) &\geq \frac{1}{4}(\coQNP^{\cc}(f) -2)
\end{align*}

\section{A new lower bound technique for CDS}

It was recently understood \cite{allerstorfer2023relating} that $f$-routing is closely related to a classical information theoretic primitive known as conditional disclosure of secrets. 
As a consequence, our new lower bounds on $f$-routing imply lower bounds on CDS.
In this work, we will informally describe these lower bounds, but see \cite{asadi2024conditional} for a more complete treatment. 

The \emph{conditional disclosure of secrets} primitive involves three parties, Alice, Bob, and the referee. 
Alice receives input $x\in \{0,1\}^n$, and Bob receives $y\in \{0,1\}^n$.
Alice additionally holds a secret string $z$, and Alice and Bob share a random string $r$. 
Alice and Bob do not communicate, but each sends a message to the referee. 
The goal is for the referee to be able to learn $z$ if and only if a function $f(x,y)=1$, where the choice of the function $f$ is known to all parties, and the referee knows both inputs $x$ and $y$.

The quantum generalization of CDS, \emph{conditional disclosure of quantum secrets} or CDQS, was recently defined \cite{allerstorfer2023relating}. 
In the quantum setting, Alice and Bob share a quantum state (rather than a random string) and can send quantum messages to the referee. 
The secret can now also be taken to be a quantum system $Q$ rather than a classical string. 
The security and privacy conditions are similar to the classical case: correctness means that the referee can invert the map on $Q$ applied by Alice and Bob when $f(x,y)=1$, and security means the map applied by Alice and Bob is close to completely depolarizing when $f(x,y)=0$. 
In general, we allow small correctness and privacy errors. 
See \cite{allerstorfer2023relating} for a formal definition. 

We define the minimal number of qubits in Alice and Bob's shared resource system in a CDQS scheme for function $f$ by $\CDQS(f)$. 
If we require the protocol to be perfectly correct we write $\pc{\CDQS}(f)$, and when perfectly private we write $\pp{\CDQS}(f)$. 
Similarly, for classical CDS, we use ${\CDS}(f)$ to denote the randomness cost of classical CDS.
Again we modify this to $\pc\CDS(f)$, $\pp\CDS(f)$ for perfectly correct and perfectly private CDS randomness cost.\footnote{The reader should be warned that the same notation is used to denote the \emph{communication} cost in other papers on CDS.} 

In \cite{allerstorfer2023relating}, it was shown that a good classical CDS protocol for a function $f$ implies a good quantum CDS protocol for the same function. 
This leads to\footnote{In fact, to obtain this we also need an amplification result which appears in \cite{asadi2024conditional}. These bounds are proved formally there.}
\begin{align*}
    \CDS(f) = \Omega(\CDQS(f))\enspace,
\end{align*}
and the same relationships hold for the perfectly correct and perfectly private variants.
Also in \cite{allerstorfer2023relating}, it was shown that a good CDQS leads to a good $f$-routing protocol, and vice versa. 
This leads to
\begin{align*}
    \CDQS(f) = \Theta(\FR(f))\enspace, 
\end{align*}
Also, the perfectly private and perfectly correct variants are related to $f$-routing with one-sided errors, 
\begin{align*}
    \pp\CDQS(f) = \Theta(\FR_0(f))\enspace, \nonumber \\
    \pc\CDQS(f) = \Theta(\FR_1(f))\enspace.
\end{align*}
This implies our lower bounds from \cref{eq:rankbound} and \cref{eq:rankboundnegation} apply to CDS, 
\begin{align*}
    \pp\CDQS(f) = \Omega(\log \rank{ g|_f})\enspace, \nonumber \\
    \pc\CDQS(f) = \Omega(\log \rank{ g|_{\neg f}})\enspace.
\end{align*}
This also gives lower bounds in terms of $\QNP^{\cc}(f)$ and $\coQNP^{\cc}(f)$, and the upper bound can be written in terms of classical CDS. 
In this way, we can obtain the lower bound
\begin{align*}
    \boxed{\pc\CDS(f) \geq \Omega(\coQNP^{\cc}(f))\enspace.}
\end{align*}
This is similar to a lower bound obtained in \cite{applebaum2021placing}, 
\begin{align*}
    \pc\CDS(f) \geq \frac{1}{2} \coNP^{\cc}(f)\enspace.
\end{align*}
The lower bound on perfectly private CDS randomness cost seems to be new, 
\begin{align*}
     \boxed{\pp\CDS(f) \geq \Omega(\QNP^{\cc}(f))\enspace.}
\end{align*}
For perfect CDS, we can bound the randomness cost by the maximum of the $\QNP^{\cc}$ and $\coNP^{\cc}$ cost. 
If we let $\text{p}\CDS$ denote the set of families of functions for which CDS can be implemented with perfect correctness and perfect privacy using polylogarithmic randomness.
Then we obtain
\begin{align*}
    \text{p}\CDS \subseteq \QNP^{\cc} \cap \coNP^{\cc}\enspace.
\end{align*}

\section{Application to concrete functions}

We detail lower bounds for equality, non-equality, greater than and less than, set-disjointness, and set-intersection below. 
We were not able to use our technique to lower bound inner product.

Our results are summarized succinctly in \cref{table:FRbounds}. 

\begin{figure}
\begin{center}
\begin{tabular}{ |c c c c c|}
\hline
Function & pFR & $\FR_1$ & $\FR_0$ & FR \\ 
 \hline
 \hline
 Equality & ${\Theta(n)}$ & $\Theta(1)$ & ${\Theta(n)}$ & $\Theta(1)$ \\  
 Non-Equality & $\Theta(n)$ & $\Theta(n)$ & $\Theta(1)$ & $\Theta(1)$ \\  
 Inner-Product &  $O(n)$, $\Omega(1)$ & $O(n)$, $\Omega(1)$ & $O(n)$, $\Omega(1)$ & $O(n)$, $\Omega(1)$ \\  
 Greater-Than & $\Theta(n)$ & $\Theta(n)$ & $\Theta(n)$ & $O(n)$, $\Omega(1)$ \\  
 Set-Intersection & $\Theta(n)$ & $\Theta(n)$ & $O(n)$,$\Omega(1)$ & $O(n)$, $\Omega(1)$ \\  
 Set-Disjointness & $\Theta(n)$ & $O(n)$,$\Omega(1)$ & $\Theta(n)$ & $O(n)$, $\Omega(1)$ \\  
 \hline
\end{tabular}
\end{center}
\caption{Known upper and lower bounds on entanglement cost (quantified by the log Schmidt rank) in $f$-routing for natural functions. No non-trivial lower bounds on entanglement were known prior to this work. Linear upper bounds on all of these functions follow from \cite{buhrman2013garden, cree2023code}. The $\Omega(1)$ lower bounds follow from \cref{appendix:trivialbounds}.}\label{table:FRbounds}
\end{figure}

\vspace{0.2cm}
\noindent \textbf{Equality and non-equality}
\vspace{0.2cm}

Choose $f(x,y)$ to be the equality function,
\begin{align*}
EQ(x,y) = \begin{cases}
		0, & x\neq y\\
            1, & x=y
		 \end{cases}\enspace.
\end{align*}
Then $g(x,y)$ is zero except on the diagonal, which forces it to have full rank, so from \cref{eq:rankbound} and \cref{eq:fBB84bound}
\begin{align*}
    \FR_0(EQ) \geq \frac{n}{4}, \,\,\,\,\,\,\,\,\,\,\,\, \text{p}\FBB84(EQ) \geq \frac{1}{4}(n-1)\enspace.
\end{align*}
For the not-equals function, we can use bound from \cref{eq:rankboundnegation} to obtain
\begin{align*}
    \FR_1(NEQ) \geq \frac{n}{4}\enspace.
\end{align*}

\vspace{0.2cm}
\noindent \textbf{Greater and less than}
\vspace{0.2cm}

Similarly, the `greater than' function, 
\begin{align*}
    GT(x,y) = \begin{cases}
			0, & x< y\\
                1, & x\geq y
		 \end{cases}\enspace.
\end{align*}
is upper triangular with non-zero elements on the diagonal, so it also has full rank, and we obtain a linear lower bound. 
\begin{align*}
    \FR_0(GT) \geq \frac{n}{4}, \,\,\,\,\,\,\,\,\,\,\,\, \text{p}\FBB84(GT)\geq \frac{1}{4}(n-1)\enspace.
\end{align*}
Further, because the negation of Greater-Than is also full rank, we can also bound $fR_1$, 
\begin{align*}
    \FR_1(GT)\geq \frac{n}{4}\enspace.
\end{align*}
The same bounds hold for the `less than' function.

\vspace{0.2cm}
\noindent \textbf{Set-Intersection and set-disjointness}
\vspace{0.2cm}

Set disjointness is upper left triangular.\footnote{To see why, consider that on the diagonal of the truth table running top right to bottom left, we have that $x+y=11...11$, the all $1$'s string. This means $x$ and $y$ must have non-zero values in non-overlapping locations, e.g. for two bits this diagonal consists of $(x=00,y=11)$, $(x=01,y=10)$, $(x=10,y=01)$ and $(x=11,y=00)$. Moving downward from any entry on that diagonal $y$ becomes larger, so must now have an overlapping entry with the $x$ string.}
This implies it is full rank, therefore
\begin{align*}
    \FR_0(DISJ) \geq \frac{n}{4},\,\,\,\,\,\,\,\,\,\,\,\text{p}\FBB84(DISJ) \geq \frac{1}{4}(n-1)\enspace.
\end{align*}
Since the negation of set intersection is set disjointness and hence of full rank, we also obtain
\begin{align*}
    \FR_1(INT) \geq \frac{n}{4}\enspace.
\end{align*}

\section{Discussion}

In this article, we have given the first non-trivial lower bounds on entanglement in the $f$-routing and $f$-BB84 tasks. 
For some natural functions, including equality, greater than, and set-disjointness, our lower bounds are linear in the classical input size. 
This realizes a long standing aspiration for $f$-routing and $f$-BB84 based position-verification schemes: that an honest party can compute a simple, classical function and perform $O(1)$ quantum operations, while a dishonest party must use entanglement that grows with the size of the classical input strings. 
However, this result comes at the unfortunate cost of requiring (one-sided) perfect correctness for the lower bounds to apply. 
Nonetheless, even proving bounds in this perfect case has so far been elusive, and we hope that our progress may open future avenues toward robust linear lower bounds. 

An interesting observation is that our bounds lead to relations among operational quantities, e.g.
\begin{align*}
    \FR_0(f) \geq \QNP^{cc}(f)\enspace
\end{align*}
but are derived by separately relating $\FR_0$ and $\QNP^{cc}$ to the non-deterministic rank. 
Given this result, it would be natural to expect a reduction at an operational level from $f$-routing to quantum non-deterministic communication complexity, but we know of no such reduction. 
Another comment related to the non-deterministic rank is that our lower bounds are in terms of the rank of matrices with positive entries wherever $f(X,y)=1$, and $0$ entires otherwise. 
Minimizing over all such matrices produces a number lower bounded by the non-deterministic rank (which allows arbitrary complex non-zero entries). 
It would be interesting to understand if the minimization over positive entries is ever larger than over complex entries.

As a consequence of our lower bounds, we also obtain a new lower bound technique for randomness complexity in CDS with either perfect privacy or perfect correctness.
We highlight the bound 
\begin{align*}
     \boxed{\pp\CDS(f) \geq \Omega(\QNP^{\cc}(f))\enspace.}
\end{align*}
which appears to be new and incomparable to earlier bounds on $\pp\CDS$. 
It would be interesting to construct explicit functions for which this lower bound is tighter than earlier ones, for instance, bounds from the $\text{PP}^{\cc}$ cost of $f$ \cite{applebaum2020power}. 

\vspace{0.2cm}
\noindent \textbf{Note added:} Recently, an equivalence between $f$-routing and $f$-BB84 was shown in \cite{bluhm2025complexity}. This allows our rank lower bounds on $f$-BB84 to be derived from the rank lower bounds on $f$-routing.

\vspace{0.2cm}
\noindent \textbf{Acknowledgements:} We thank Richard Cleve for helpful discussions during the development of this project. 
Harry Buhrman and François Le Gall suggested a connection between our bounds and non-deterministic rank, and consequently to non-deterministic quantum communication complexity.
Research at Perimeter Institute is supported in part by the Government of Canada through the Department of Innovation, Science and Economic Development Canada and by the Province of Ontario through the Ministry of Colleges and Universities.

\appendix

\section{\texorpdfstring{$\Omega(1)$}{TEXT} lower bounds on entanglement in \texorpdfstring{$f$}{TEXT}-routing}\label{appendix:trivialbounds}

It is straightforward to establish an $\Omega(1)$ lower bound on entanglement for $f$-routing, though to our knowledge this lower bound is not recorded elsewhere in the literature, so we give it here. 
Note that for $f$-BB84 a $\Omega(1)$ lower bound follows from theorem 6.1 in \cite{buhrman2014position}. 

To show the lower bound in the $f$-routing setting, we will first notice that in a $f$-routing task we can always replace a mixed resource state $\Psi_{LR}$ with a pure state $\ket{\Psi}_{LR}$. 
Our lower bound will apply whenever $f(x,y)$ has an input $x=x_0$ such that $f(x_0,y)=F(y)$ is non-constant. 
We will assume perfect correctness of the $f$-routing protocol in our arguments but the generalization to the robust setting is immediate. 

Considering the perfect protocol, we allow Alice and Bob to complete the easier task of performing $f$-routing when they know $x=x_0$. 
Alice, on the left, applies a quantum channel $\mathcal{N}_{QL\rightarrow AB}$, she keeps system $A$ and sends $B$ to Bob. 
Without decreasing their success probability we can take the isometric extension of this channel and have Alice keep the purifying system so that the state on $ABR$ is pure. 
Bob, who knows $y$, knows where system $Q$ should be brought so can, without decreasing their success probability, always simply forward the $R$ system to whichever party should receive $Q$. 
Correctness of the protocol then requires that both $AR$ and $BR$ can recover $Q$. 

Let $Q$ be in the maximally entangled state $\Psi^+_{\bar{Q}Q}$ with reference system $\bar{Q}$. 
Then correctness of the protocol implies that
\begin{align}\label{eq:recoveryMIs}
    I(\bar{Q}:AR) = 2\log d_R\enspace, \nonumber \\
    I(\bar{Q}:BR) = 2\log d_R\enspace.
\end{align}
We claim this implies that $S(R)\geq \log d_R$. 
First notice that purity of $\bar{Q}ABR$ and the second line above implies $I(\bar{Q}:A)=0$.  
Then we apply the inequality, 
\begin{align*}
    I(\bar{Q}:AR) \leq I(\bar{Q}:A) + 2S(R)\enspace,
\end{align*}
which can be proven from strong subadditivity, and using the first line of \cref{eq:recoveryMIs} we are done. 
In the robust setting, we can prove a similar bound by applying the above argument along with the continuity of the von Neumann entropy.

\printbibliography

\end{document}